\theoremstyle{plain} 
\newtheorem{thm}{Theorem}
\newtheorem{lem}{Lemma}
\theoremstyle{definition}
\theoremstyle{remark}
\newtheorem{example}{Example}
\newcommand{\RR}{\mathbb{R}}
\newcommand{\A}{\mathscr{A}}
\newcommand{\E}{\mathsf{E}}
\newcommand{\bigmid}{\; \Bigl\vert \;}
\newcommand{\ptrue}{p^\star}
\newcommand{\Ptrue}{P^\star}
\newcommand{\N}{\mathsf{N}}
\newcommand{\nm}{\mathsf{N}}
\newcommand{\bet}{\mathsf{Beta}}
\newcommand{\gam}{\mathsf{Gam}}
\newcommand{\dpp}{\mathsf{DP}}
\newcommand{\dir}{\mathsf{Dir}}
\def\d{\mbox{d}}
\begin{document}

\title{On recursive Bayesian predictive distributions}
\author{P. Richard Hahn\footnote{Booth School of Business, The University of Chicago, {\tt richard.hahn@chicagobooth.edu}}, \; Ryan Martin\footnote{Department of Statistics, North Carolina State University, {\tt rgmarti3@ncsu.edu}}, \; and \; Stephen G. Walker\footnote{Department of Mathematics, University of Texas at Austin, {\tt s.g.walker@math.utexas.edu}}}
\date{\today}

\maketitle

\begin{abstract}
A Bayesian framework is attractive in the context of prediction, but a fast recursive update of the predictive distribution has apparently been out of reach, in part because Monte Carlo methods are generally used to compute the predictive.  This paper shows that online Bayesian prediction is possible by  characterizing the Bayesian predictive update in terms of a bivariate copula, making it unnecessary to pass through the posterior to update the predictive.  In standard models, the Bayesian predictive update corresponds to familiar choices of copula but, in nonparametric problems, the appropriate copula may not have a closed-form expression.  In such cases, our new perspective suggests a fast recursive approximation to the predictive density, in the spirit of Newton's predictive recursion algorithm, but without requiring evaluation of normalizing constants.  Consistency of the new algorithm is shown, and numerical examples demonstrate its quality performance in finite-samples compared to fully Bayesian and kernel methods. 

\smallskip

\emph{Keywords and phrases:} Copula; density estimation; nonparametric Bayes; prediction; recursive estimation.
\end{abstract}

\section{Introduction}  

Predictive distributions play a prominent role in Bayesian theory; in fact, sequences of predictive densities fully characterize a Bayesian model via the well known de~Finetti representation theorem, as discussed in de~Finetti (1937) and Hewitt and Savage (1955).  The Bayesian predictive density is obtained by updating the prior to the posterior and then marginalizing over the model parameters.  In particular, if $f(y \mid \theta)$ is the statistical model for iid real-valued data $Y_1,\ldots,Y_n$ and $\pi$ is the prior distribution for the parameter $\theta$, then the predictive density for $Y_{n+1}$, given $(y_1,\ldots,y_n)$, is given by
\begin{equation}
p_n(y)=\int f(y \mid \theta)\,\pi_n(\d \theta) \label{pred}
\end{equation}
where the posterior distribution $\pi_n$ for $\theta$, given $(y_1,\ldots,y_n)$, is
$$\pi_n(\d\theta)=\frac{\prod_{i=1}^n f(y_i \mid \theta) \,\pi(\d\theta)}{\int \prod_{i=1}^n f(y_i \mid \theta')\,\pi(\d\theta') }.$$
Intuitively, this Bayesian approach should be ideally suited to the accurate and coherent updating of information. However, by examining (\ref{pred}), one can see that there is no obvious route to quickly update the predictive directly: when a new observation is received, one first updates the posterior and then computes the integral to obtain the predictive.  This can be especially prohibitive when Monte Carlo methods are needed to compute the posterior.  The goal of this paper is to show that the Bayesian predictive distribution update can, indeed, be expressed in a recursive form, making fast online Bayesian prediction possible, even in complex nonparametric models.  

To show that the Bayesian predictive $p_n$ in \eqref{pred} can be updated without directly passing through the posterior, alleviating the need for Monte Carlo methods in Bayesian prediction, our starting point is a new observation that the predictive updates can be expressed in terms of a sequence of bivariate copula densities (e.g., Nelson 1999).  This observation is interesting for at least three reasons:
\begin{itemize}
\item according to de~Finetti's representation theorem, this sequence of copula densities provides an alternative characterization of the Bayesian model itself; 
\vspace{-2mm}
\item in cases where this sequence of copula densities can be identified analytically, this representation provides fast recursive updates to the Bayesian predictive; 
\vspace{-2mm}
\item and, even in cases where the sequence of copula densities cannot be written down analytically, the copula representation provides new insights on how to approximate the recursive updates.  
\end{itemize} 

The latter point above leads to the main contribution of the paper.  Many applications require both flexible nonparametric modeling and fast online estimation (Caudle and Wegman 2009). Such applications include  color modeling and tracking of objects (e.g., Elgammal et al.~2003; Han et al.~2008), finance (Lambert et al.~1999), network security, remote sensing. A particularly notable application area is the use of the Twitter data stream to make real-time predictions (Gerber 2014). However, the challenges in updating the Bayesian predictive are most acute in nonparametric problems, so kernel-based densities estimates (e.g., Raykar et al.~2010; Nakamura and Hasegawa 2013) are often preferred over Bayesian methods in these applications. While Bayesian methods have been used in parametric online prediction problems (e.g. the dynamic state-space models of West et al.~1985), their adoption in analogous nonparametric settings has been limited by extreme computational demands (Creal~2012). Bayesian computational methods, even those geared towards sequential analysis (e.g. Drovandi et al.~2013, Polson et al.~2010), do not focus on the {\em predictive distribution} directly, and therefore devote considerable resources to the computation of a posterior distribution over parameters. The approach in this paper will be to compute the predictive distribution directly. 

Perhaps the most commonly used Bayesian nonparametric model is the mixture of Dirichlet processes (e.g., Escobar 1988; Escobar and West 1995), but the need for Markov chain Monte Carlo methods to compute the posterior motivated Newton and Zhang (1999) and Newton (2002) to propose a \emph{predictive recursion} algorithm for estimating the posterior; see, also, Martin and Ghosh (2008), Tokdar et al.~(2009), and Martin and Tokdar (2009, 2011).  Despite its name, the predictive recursion algorithm is not fully satisfactory for estimating the predictive distribution: it targets the posterior instead of the predictive, so integration is needed to compute normalizing constants, etc.  Our copula characterization of the predictive update remains valid in nonparametric problems, but it may not be possible to derive the sequence of copula densities in closed-form.  It does, however, suggest a new version of the predictive recursion algorithm that targets the predictive density directly, avoiding the difficult problem of computing normalizing constants.  Besides being intuitively clear and fast to compute, we show both theoretically and numerically the accuracy of our proposed recursive predictive density estimate.   

The layout of the paper is as follows.  In Section~2 we provide the details of our representation of the predictive via copula models and identify the particular sequence of copula densities for some common Bayesian models.  Our investigation of the mixture of Dirichlet processes model lays the foundation for our recursive algorithm that directly targets the predictive densities presented in Section~3.  Numerical examples given in Section~4 demonstrate that the recursive copula approach is competitive with other common density estimation methods, i.e., Bayesian Gaussian mixture models and kernel density estimation methods, on prediction tasks.  In Section~5, we establish Kullback--Leibler consistency of the predictive distribution sequence.  Section~6 provides some concluding remarks and Appendices~A--B provide some technical proofs and other details about the recursive algorithm.

\section{A new look at Bayesian predictive updates} 

\subsection{Characterizing the updates via copula densities}

To characterize the Bayesian predictive updates, we take a sequential point of view.  That is, if $p_{n-1}$ is the predictive density for $Y_n$ based on observations $(y_1,\ldots,y_{n-1})$, then we want an update $(p_{n-1}, y_n) \mapsto p_n$ for the predictive density for $Y_{n+1}$ based on observations $(y_1,\ldots,y_n)$.  Consider the bivariate function $k(y,y')$ that satisfies 
\begin{equation}
p_n(y)=p_{n-1}(y)\,k(y,y_n).\label{update}
\end{equation}
Therefore, 
$$k(y,y_n)=\frac{p_n(y)}{p_{n-1}(y)}$$
which is symmetric in $(y,y_n)$, since
\begin{equation}
k(y,y_n)=\frac{\int f(y \mid \theta) \, f(y_n \mid \theta)\,\pi_{n-1}(\d\theta)}{\int f(y \mid \theta)\,\pi_{n-1}(\d\theta) \, \int f(y_n \mid \theta)\,\pi_{n-1}(\d\theta)}.\label{kstuff}
\end{equation}
The function $k(y,y_n)$ in (\ref{kstuff}) is easily seen to be a bivariate copula density function; that is, for some symmetric copula density $c_n$, which depends only on the sample through the sample size, we have
\begin{equation}
k(y,y_n)=c_n\left(P_{n-1}(y),P_{n-1}(y_n)\right)\label{copula0}
\end{equation}
where $c_n(u,v)=c_{n}(v,u)$ is a symmetric copula density, and $P_{n-1}$ is the distribution function corresponding to the predictive density $p_{n-1}$. 

We can now write the update $(p_{n-1}, y_n) \mapsto p_n$ as 
\begin{equation}
p_n(y) = c_n(P_{n-1}(y),P_{n-1}(y_n)) \, p_{n-1}(y) \label{copula}
\end{equation}
and for each Bayesian model there is a unique sequence $c_n$. Now (\ref{copula}) allows for the direct update of the predictive and moreover it can be seen that all one needs to direct a sequence of predictive densities is to define a sequence of copula functions $c_n$, the key to which is that $c_n \to 1$ as $n \to \infty$, i.e. the sequence of copula converges to the independent copula as the sample size increases.  

To put this all into context, the de~Finetti characterization of a Bayesian model is in terms of a (dependent) joint distribution over all future observables $p(y_1, y_2, y_3, \dots)$ and such a joint distribution can always be expressed in compositional form $p(y_1)p(y_2 \mid y_1)p(y_3 \mid y_1, y_2)\dots$. Additionally, Sklar's theorem (Sklar 1959) tells us that any joint distribution can be represented in copula form. These elements are familiar.
This paper focuses on the computational properties of a copula representation for the bivariate conditional distribution $p(y_n, y_{n+1} \mid y_{n-1}, \dots y_{1})$, as given in (\ref{copula0}), which will lead to a novel approximation of the predictive update in (\ref{copula}).

%
%

\subsection{Parametric model examples}

In this section we consider some standard Bayesian models, focusing on identifying the corresponding sequence $c_n$ of copula densities that characterizes the predictive updates. 

\begin{example}[Exponential model]
Here we consider the model and prior as $f(y \mid \theta) = \theta e^{-\theta y}$ and $\pi(\theta) = e^{-\theta}$, respectively.  Then standard calculations give 
$$p_{n-1}(y)=n\,\frac{T_{n-1}^n}{(T_{n-1}+y)^{n+1}},$$
where $T_{n-1}=1+y_1+\cdots+y_{n-1}$, 
and
$$p_n(y)=(n+1)\,\frac{(T_{n-1} + y_n)^{n+1}}{(T_{n-1} + y_n + y)^{n+2}}.$$
Therefore,
$$k(y,y_n)=\frac{(n+1)\,(T_{n-1}+y_n)^{n+1}\,(T_{n-1} + y)^{n+1}}{n\,T_{n-1}^n\,(T_{n-1} + y_n + y)^{n+2}},$$
which can be seen to be symmetric in $(y,y_n)$. Now
$$1-P_{n-1}(y)=\Bigl(\frac{T_{n-1}}{T_{n-1}+y}\Bigr)^n$$
and so $y=T_{n-1}\,\bigl[(1-P_{n-1}(y))^{-1/n}-1\bigr]$.  Therefore,
$$k(y,y_n)=\frac{n+1}{n}\,\frac{\{1-P_{n-1}(y)\}^{-(n+1)/n}\,\{1-P_{n-1}(y_n)\}^{-(n+1)/n}  }{\bigl[\{1-P_{n-1}(y)\}^{-1/n}+\{1-P_{n-1}(y_n)\}^{-1/n}-1\bigr]^{n+2} }$$
and so we have the Clayton copula (Clayton 1978), i.e.,
\[ c_n(u,v)=\frac{n+1}{n}\,\frac{(1-u)^{-1-1/n}\, (1-v)^{-1-1/n}}{\{(1-u)^{-1/n}+(1-v)^{-1/n}-1\}^{n+2}}, \]
with parameter $n^{-1}$, describing the sequence of predictive distributions.  Note that, as $n \to \infty$, $c_n$ converges to the independence copula.  
\end{example}

The calculations in Example~1 can be generalized to cover an exponential family model with conjugate prior, i.e., $f(y \mid \theta)=\xi(y)\,e^{y\,\theta-b(\theta)}$ and $\pi(\theta)\propto e^{\lambda\theta-\tau b(\theta)}$.  A by-product of this argument is the identification of a new and general class of copula that contains the Archimedean class.  Details are provided in Appendix~B.1.  

\begin{example}[Normal model]
Here we consider a normal model $f(y \mid \theta)=\N(y \mid \theta,1)$ and a conjugate prior $\pi(\theta)=\N(\theta \mid 0,\tau^{-1})$.  We claim that the predictive updates are characterized by a Gaussian copula with correlation parameter $\rho_n=(n+\tau)^{-1}$. In particular, we claim that the $c_n$ in \eqref{copula} is the Gaussian copula density $c_{\rho_n}$, where 
\begin{equation}
\label{eq:gauss.copula}
c_\rho(u,v) = \frac{\nm_2(\Phi^{-1}(u), \Phi^{-1}(v) \mid 0, 1, \rho)}{\nm(\Phi^{-1}(u) \mid 0, 1) \nm(\Phi^{-1}(v) \mid 0, 1)}, 
\end{equation}
with $\nm_2(\cdot \mid 0, 1, \rho)$ the standard bivariate normal density, with correlation $\rho$, and $\Phi$ the $\nm(0,1)$ distribution function.  To see this, start with the known form for the predictive, 
$$p_{n-1}(y)=\N\Bigl(y \bigmid \frac{T_{n-1}}{n-1+\tau},\frac{n+\tau}{n-1+\tau}\Bigr),$$
where $T_{n-1}=y_1 + \cdots + y_{n-1}$.  If we set $\mu_n=T_{n-1}/n$ and $\sigma_n^2=(n+\tau)/(n-1+\tau)$, then we have
$$P_{n-1}(y)=\Phi\Bigl(\frac{y-\mu_n}{\sigma_n}\Bigr). $$
Then the ratio $p_n(y) / p_{n-1}(y)$ is exponential and the key term in the exponent is 
$$\Bigl(y-\frac{y_n+T_{n-1}}{n+\tau}\Bigr)^2\,\frac{n+\tau}{n+1+\tau}-\Bigl(y-\frac{T_{n-1}}{n-1+\tau}\Bigr)^2\,\frac{n-1+\tau}{n+\tau}.$$
Next, using the fact that $\Phi^{-1}(P_{n-1}(y))=(y-\mu_n)/\sigma_n$, the key term in the exponent of $c_{\rho_n}(P_{n-1}(y), P_{n-1}(y_n))$ is  
$$\frac{\rho_n^2}{1-\rho_n^2}\Bigl[\Bigl(\frac{y-\mu_n}{\sigma_n}\Bigr)^2+\Bigl(\frac{y_n-\mu_n}{\sigma_n}\Bigr)^2\Bigr]-\frac{2\rho_n}{1-\rho_n^2}\Bigl(\frac{y-\mu_n}{\sigma_n}\Bigr)\,\Bigl(\frac{y_n-\mu_n}{\sigma_n}\Bigr).$$
The expressions in the two previous displays are equal up to constant terms when $\rho_n=(n+\tau)^{-1}$, which proves the claim.  Note that if the model were $f(y \mid \theta) = \nm(y \mid \mu, \sigma^2)$, with $\theta=(\mu,\sigma^2)$, and we put a standard conjugate prior on the variance parameter $\sigma^2$, then we would recover the Student-t copula for the update.
\end{example}

\begin{example}[Multinomial model]
Consider a multinomial model where there are $M$ categories and $f(y \mid \theta) = \theta_y$, where $\theta = (\theta_1,\ldots,\theta_M)$ is a probability vector.  Take a conjugate prior $\theta \sim \dir(\alpha_1,\ldots,\alpha_M)$, where each $\alpha_y$ is non-negative.  For data $y_1,\ldots,y_n$, let $T^n$ be the frequency table, with $T_y^n$ denoting the number of observations equal to $y$, $y \in \{1,\ldots,M\}$.  Using the standard theory for the multinomial--Dirichlet model, the predictive distribution $p_n$ is given by 
\[ p_n(y) = \frac{T_y^n + \alpha_y}{n + \beta}, \quad y \in \{1,\ldots,M\}, \]
where $\beta = \sum_{j=1}^M \alpha_j$.  From here, we can easily recover the predictive density ratio in \eqref{kstuff}:
\[ k(y, y_n) = \frac{p_n(y)}{p_{n-1}(y)} = \frac{n-1 + \beta}{n + \beta} \Bigl\{ 1 + \frac{1(y=y_n)}{T_y^{n-1} + \alpha_y} \Bigr\}. \]
To see what copula the predictive update corresponds to, we need to convert to the distribution function scale to find the function $C_n$ such that
\begin{align*}
C_n(P_{n-1}(y), P_{n-1}(y_n)) & = \sum_{z \leq y, \, z' \leq y_n} k(z, z') p_{n-1}(z) p_{n-1}(z') \\
& = (1-w_n) \, P_{n-1}(y) P_{n-1}(y_n) + w_n \, P_{n-1}(y) \wedge P_{n-1}(y_n), 
\end{align*}

\noindent where $w_n = (n + \beta)^{-1}$ and $x \wedge y = \min\{x,y\}$.  Thus, $C_n$ is a mixture of the Frechet--Hoeffding copula, $C_M(u,v)=u \wedge v$, and the independence copula, $C_I(u,v)=u\,v$.  Note that the $1-w_n$ weight assigned to independence copula converges to 1 as $n \to \infty$.  
\end{example}

\ifthenelse{1=1}{}{
\begin{example}[Multinomial model]
{\color{red} Here we consider the model where $y \in\{1,\ldots,M\}$ and $f(y \mid \theta)=\theta_y$ with $\theta \sim \dir(\alpha_1,\ldots,\alpha_M)$. This model is well known and 
$$c(j,j')=\frac{\E (\theta_j\,\theta_{j'})}{\E (\theta_j)\,\E (\theta_{j'})} = \begin{cases} q & \text{if $j \neq j'$} \\ q(1 + q/\alpha_j) & \text{if $j=j'$}, \end{cases} $$
where $q=b/(1+b)$ and $b=\sum_{j=1}^M \alpha_j$. Therefore, to recognize this copula, we find the copula distribution 
$$C(i,j)=\sum_{\ell \leq i,k \leq j} c(\ell,k)\,\E (\theta_\ell)\,\E (\theta_k)$$
which in more detail is
\begin{equation*}
\begin{split}
C(i,j) &=\frac{1}{b(b+1)}\Bigl\{\sum_{\ell=1}^{i \wedge j}\alpha_\ell+\sum_{\ell \leq i,k\leq j}\alpha_\ell\,\alpha_k \Bigr\} =\frac{1}{b+1}\,A_i \wedge A_j +\frac{b}{b+1}\,A_i\,A_j,
\end{split}
\end{equation*}
where $A_i=b^{-1}\sum_{\ell \leq i}\alpha_\ell$ and $x \wedge y = \min(x,y)$.  Thus $C(i,j)$ is a mixture of the Frechet--Hoeffding copula, $C_M(u,v)=u \wedge v$, and the independence copula, $C_I(u,v)=u\,v$.}
\end{example}
}

\subsection{A nonparametric model example}
\label{SS:dpmix}

Here we consider a nonparametric model, namely, a mixture of Dirichlet processes model as considered in Escobar (1988) and Escobar and West (1995), given by
$$f(y,G)=\int K(y \mid \theta)\,\d G(\theta),$$
where $K(y \mid \theta)$ is a given kernel and the prior assigned to $G$ is a Dirichlet process prior $\dpp(c,G_0)$, where $G_0$ is the base measure and $c > 0$ is the precision parameter (Ferguson, 1973).  This model was first introduced in Lo (1986) and the constructive definition of the Dirichlet process, see Sethuraman (1994), means we can write
$$f(y,G)=\sum_{j=1}^\infty w_j\,K(y \mid \theta_j),$$
where the $(\theta_j)$ are iid $G_0$ and the weights $(w_j)$ follow a stick-breaking construction, i.e., $w_1 = v_1$ and, for $j > 1$, $w_j=v_j\prod_{\ell<j}(1-v_\ell)$, with $(v_j)$ iid $\bet(1,c)$.  Hjort et al.~(2010) give details on this model and inference procedures using Markov chain Monte Carlo.

Let us assume that $K(y \mid \theta)=\N(y \mid \theta,1)$ and $G_0$ is $\N(0,\tau^{-1})$, as in Example~2. We can extend this to include a prior on the variance and we will recover the Student-t copula instead of the Gaussian copula.  Now, for the first update, we can compute the copula density; it is given by
\begin{equation}
\frac{\E \{f(y,G)\,f(y_1,G)\}}{p_0(y)\,p_0(y_1)},
\end{equation}
where $p_0(y)=\int K(y \mid \theta)\,\d G_0(\theta)$ is a $\mbox{N}(0,1+\tau^{-1})$ density, and
\begin{equation}
\E\{f(y,G)\,f(y_1,G)\}=\alpha \int K(y \mid \theta)\,K(y_1 \mid \theta)\,\d G_0(\theta) + (1-\alpha)\,p_0(y) \, p_0(y_1),
\end{equation}
and $\alpha = \sum_{j=1}^\infty \E(w_j^2)$.  Hence, the copula is a mixture of the Gaussian copula, $c_{\rho_0}$, in \eqref{eq:gauss.copula} with $\rho_0$ as in Example~2, and the independence copula.  Rewriting to explicitly highlight the copula representation yields
\begin{equation}\label{approx}
p_1(y)=(1-\alpha)\,p_0(y)+\alpha \,p_0(y)\,c_{\rho_0}\bigl( P_0(y),P_0(y_1) \bigr).
\end{equation}
Note that when written in this form, $p_0(y)$ need not be Gaussian any longer to define a valid update; the assumption of the Gaussian kernel $K(y \mid \theta)$ is reflected in the form of $c_{\rho_0}$, and $p_0(y)$ can be any choice of density function. One can think of this as first transforming ones data to standard normal and then applying the Bayesian update corresponding to the Dirchlet process model.

While it is not straightforward to extend the above derivation to a general update from $p_{n-1} \to p_n$ our strategy will be to iteratively apply (\ref{approx}) at each step, analogous to the approach of Newton for recursively approximating the posterior distribution; here we apply this idea directly to predictive distributions.

\section{Nonparametric recursive predictive distribution} 


Motivated by the calculations for the mixture of Dirichlet processes model in Section~\ref{SS:dpmix}, we propose the following recursive algorithm for directly updating the predictive, completely avoiding the posterior.  In particular, fix an initial guess $P_0$, with density $p_0$, and a sequence of weights $(\alpha_n) \subset (0,1)$.  Then, sequentially compute 
\begin{equation}
p_n(y)=(1-\alpha_n)\,p_{n-1}(y)+\alpha_n\, p_{n-1}(y)\,c_\rho(P_{n-1}(y),P_{n-1}(y_n)), \quad n \geq 1. \label{mdp}
\end{equation}
where $c_\rho$ is the Gaussian copula density in \eqref{eq:gauss.copula}.  The sequence $(\alpha_n)$ is based on stick breaks which are iid $\bet(1,c+n-1)$.  Therefore, they look like roughly $n^{-1}$, which is effectively what Newton took them to be; see \eqref{eq:weight.sequence} below.  Note that $\sqrt{1-\rho^2}$ is analogous to a kernel density bandwidth setting; after pre-scaling the data, we find that $\rho = 0.95$ works well in practice. Also note that the copula formulation amounts to applying a Gaussian transformation at each step, before carrying out the $n=1$ Dirichlet update.


Here we make three remarks. First, in the Gaussian copula model in Example~2, the sample size was captured by $\rho_n$ but, in \eqref{mdp}, the $\rho$ is held fixed and the sample size is carried by $\alpha_n$. Indeed, it is $\alpha_n$ going to 0 that takes us to the independence copula.  Second, the coherence property enjoyed by the ``correct'' Bayesian update, i.e., 
\[ \int p_n(y) p_{n-1}(y_n) \,\d y_n = p_{n-1}(y), \]
comes at a price---it cannot be computed recursively.  On the other hand, by sacrificing this coherence, we can get a fast update which is still theoretically and numerically accurate.  To be clear, the update $p_0$ to $p_1$ is the exact Bayesian update and, therefore, must be good; our proposal is to replicate this ``good'' update for all $n$.  We lose the coherence property above, but gain computational efficiency; simulations (reported later) suggest that the resulting approximation of the predictive is satisfactory at various values of $n > 1$.  Third, although Newton's original algorithm can be used to compute an approximation to the predictive, there are difficulties due to the need to evaluate intractable normalizing constants.  Indeed, the $i^{\text{th}}$ step of Newton's original algorithm, which takes the previous estimate $G_{i-1}$ of the mixing distribution and the current observation $y_i$ to a new estimate $G_i$, requires evaluation of a normalizing constant $\int K(y_i \mid \theta) \, \d G_{i-1}(\theta)$, which cannot be computed analytically since $G_{i-1}$ is not of any standard form.  By working directly with the predictive, as we do here, there is no need to evaluate such normalizing constants.  

A few words should also be said about the implementation.  It is actually simpler to work on the distribution function scale, where the algorithm looks like 
\begin{equation}
P_n(y)=(1-\alpha_n)\,P_{n-1}(y)+\alpha_n\,H_\rho(P_{n-1}(y),P_{n-1}(y_n)). \label{cdf}
\end{equation}
where
\begin{equation}
\label{eq:H.rho}
H_\rho(u,v)=\Phi\Bigl(\frac{\Phi^{-1}(u)-\rho\,\Phi^{-1}(v)}{\sqrt{1-\rho^2}}\Bigr). 
\end{equation}
In this formulation it is evident that $P_n$ in \eqref{cdf} is a weighted average of $P_{n-1}$ and a suitable transformation of a normal distribution with variance $1-\rho^2$ and centered at $\rho \Phi^{-1}(P_{n-1}(y_n))$. As $\rho$ nears 1, this second term becomes a step distribution with single jump at $\Phi^{-1}(P_{n-1}(y_n))$. Intuitively, the method is similar to kernel density estimation, with two differences: iteratively applied adaptive transformations based on the current distribution estimate $\Phi^{-1}(P_{n-1}(\cdot))$, and shrinkage towards the prior predictive $P_{n-1}$.

Computationally, we take a fixed grid of points, $\{\bar y_m: m=1,\ldots,M\}$, in $\RR$ and compute the sequence $P_n(\bar y_m)$ for each $m$.  Then the distribution function $P_n(y)$ can be plotted by interpolation.  From this, the density $p_n(y)$ can be obtained by approximating the derivative by a difference ratio.  Given the distribution function or density evaluated on a fine grid of points, features of the predictive distribution, such as the mean or quantiles, can be readily obtained.

We conclude this section by giving an illustration of the recursive predictive distribution estimator for univariate data; a bivariate data example is presented in Appendix~A.  We compared to a Dirichlet process mixture of normals as well as a mixture of P\'olya trees. The example is taken from the {\tt R} package {\tt DPpackage} (Jara et al 2011).  Consider the well-known ``galaxy'' data of Roeder (1990) consisting of $n=82$ velocity measurements (in km/second) of galaxies obtained from an astronomical survey of the Corona Borealis region. Figure~1 shows three density estimates: a mixture of P\'olya trees, a Dirichlet process mixture of normals, a kernel density estimate, and the new recursive approximation. For this fit we use an empirical Bayes selection of the hyperparameters with $p_0$ a normal density with variance 9 and mean set to the mean of the data; we also take $\rho=0.95$ and $\alpha_i = (i+1)^{-1}$. The priors of the P\'olya tree and Dirichlet process model are set according to the demonstration code from the {\tt DPpackage}. Note that the Dirichlet process fit is quite close to the recursive approximation (the dashed versus the solid densities).

\begin{figure}
\begin{center}
\includegraphics[width=4.5in]{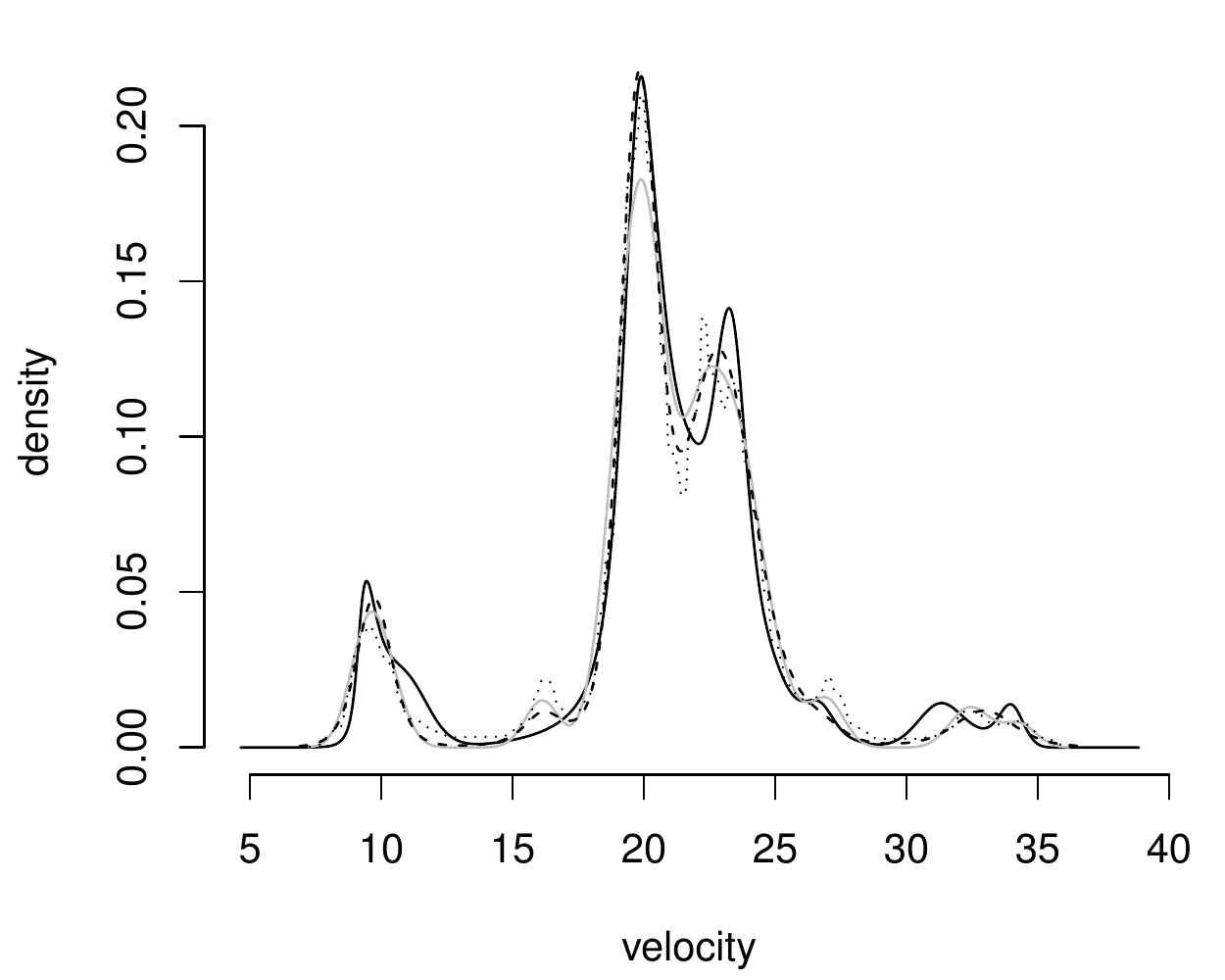}
\caption{With $n=82$ observations from Roeder (1990), the fits from a P\'olya tree mixture model (dotted), a Dirichlet process mixture of normals (dashed), the new recursive method (solid) give visually similar fits. A kernel density estimate is also shown in gray.}
\end{center}
\end{figure}

\section{Simulation studies}

\subsection{Setup}

Keeping with our focus on predictive distributions, this section evaluates the recursive estimator \eqref{cdf}, with $\rho=0.95$ and $\alpha_i = (i+1)^{-1}$, in terms of a predictive loss, measuring the difference between a prediction and a future realization of an observable variable.  Specifically, we consider a finite vector of quantiles, defining a vector valued check-loss function.  The check loss function is a piece-wise linear loss function which can be expressed as
\begin{equation}
\ell_q(y, a) = (1-q)(a-y)\mathbb{1}(y<a) + q\mathbb{1}(y > a)(y - a), \quad q\in(0,1).
\end{equation}
Check loss gets its name from the check-shaped graph of the function. 
Check loss can be justified intuitively in terms of asymmetric costs. To take a simple example, consider a restaurant: too much inventory leads to waste via spoilage at some cost per unit (purchase price), while too little inventory leads to foregone sales due to unfulfillable orders at a distinct cost per unit (because orders for multiple items are canceled in their entirety).  Check loss is intimately related to quantile estimation as follows: it is straightforward to show that for any density function $f(y)$ with distribution function $F(y)$, the integrated (expected) check loss is minimized at $F^{-1}(q)$.  In our simulation study, we use a vector-valued check loss function defined by a vector parameter $q$; specifically we consider $q = (0.001,\,0.01,\,0.1,\,0.25,\,0.5,\,0.75,\,0.9,\,0.99,\,0.999)$.

\subsection{Batch mode simulation study}

For our first simulation study, we compare the performance of our recursive approximation of the predictive density to that arising from the posterior of a Bayesian Dirichlet process mixture model, fit using the function {\tt DPdensity} as well as a P\'olya tree mixture model using the function {\tt PTdensity}, both from the {\tt R} package {\tt DPpackage} (Jara et al 2011).  Because the goal of our simulation is to compare the closeness of the approximation, all model hyperparameters were calibrated to replications of the data before the simulation study was started, to ensure that the model fits were not grossly inappropriate. Details of the model fitting are available in the authors' {\tt R} script.


We generate the data, $Y$, according to a two component mixture of t-distributions with 5 degrees of freedom. One of these components is fixed to have location parameter 1 and scale parameter 1.  The second component has mean $\mu$ and scale $s+1$. We simulate 500 independent samples from this distribution of size $n=50$. For each sample, the values of $\mu$, $s$ and the mixing proportion $w$, are drawn at random according to $w \sim \bet(2,2)$, $s \sim \gam(1,1)$ and $\mu \sim \nm(0,4)$. 

To evaluate each method, we compute the mean check loss on a Monte Carlo sample of size 100,000 from the true distribution, using the optimal action according to the inferred predictive distribution using each method, which we denote $a_{\text{recursive}}$ and $a_{\text{bayes}}$ respectively. We also compute $a_{\text{truth}}$ which is the check loss minimizer according to the true data generating distribution. Finally, we consider the scaled difference of integrated check loss: 
\begin{equation}
\Delta_q = \frac{\E\lbrace\ell_q(Y, a_{\text{recursive}})\rbrace - \E\lbrace\ell_q(Y, a_{\text{bayes}})\rbrace}{\E\lbrace\ell_q(Y, a_{\text{truth}})\rbrace}.
\end{equation}
We evaluate $\Delta_q^{(j)}$ for $j= 1\dots500$ trials and a range of $q$. The upshot is that for the ``easier'' quantiles, the three methods all agree nicely. There is greater discrepancy for very high and very low quantiles; it is notable, however, that the recursive update method gives better average loss on these quantiles, although the reason why is not clear. The comparisons to the Dirichlet process mixture of normals is given in Table~1; the comparison to the P\'olya tree mixture is given in Table~2.  This simulation study was conducted for various samples sizes, from 10 to 100, with qualitatively similar results.



\begin{table}
\begin{center}
\begin{tabular}{r|rrr}
$q$ &  Mean & Median & St. Dev. \\
 \hline
0.001 &  $-$10\% & 0\%& 56\%\\
0.01 & $-$26\% & 0\% & 50\% \\
0.10 & 0\% & 0\%& 2\% \\
0.25 & 0\% & 0\%& 1\% \\
0.50 & 0\% & 0\%& 1\%\\
0.75 & 0\% & 0\%& 1\%\\
0.90 & $-$0\% & 0\%& 3\%\\
0.99 & $-$3\% & 0\%& 10\%\\
0.999 & $-$8\% & 0\%& 90\%\\
 \end{tabular}
 \caption{Summary statistics of the distribution of $\Delta_q$ defined relative to the Dirichlet process mixture of Gaussians across 500 simulations for $n=50$ observations.}
\end{center}
\end{table}

\begin{table}
\begin{center}
\begin{tabular}{r|rrr}
$q$ &  Mean & Median & St. Dev. \\
 \hline
0.001 &  $-$3\% & 0\%& 67\%\\
0.01 & $-$22\% & 0\% & 63\% \\
0.10 & 0\% & 0\%& 3\% \\
0.25 & 0\% & 0\%& 1\% \\
0.50 & 0\% & 0\%& 1\%\\
0.75 & 0\% & 0\%& 2\%\\
0.90 & 0\% & 0\%& 3\%\\
0.99 & $-$3\% & 0\%& 27\%\\
0.999 & $-$70\% & $-$32\%& 159\%\\

 \end{tabular}
 \caption{Summary statistics of the distribution of $\Delta_q$ defined relative to the P\'olya tree mixture across 500 simulations for $n=50$ observations.}
 \label{results}

\end{center}
\end{table}

\subsection{Sequential simulation study}

Next, we consider online prediction according to the check loss function. That is, as individual observations arrive, we want to make an optimal action to be evaluated upon the subsequent observation. In this scenario, the extreme slowness of an MCMC approach precludes the use of the routine Gibbs sampled Gaussian mixture model, as this setting would demand rerunning the full sampling chain each time a new observation arrived. As such, our comparison method for this exercise is the Dirichlet process Gaussian mixture model particle learning method described in Carvalho, et al. (2010), which is, by construction, more computationally suited to the on-line setting. We do not provide the details of this method here.  Additionally, we compare to a kernel density estimator with bandwidth selected by the method of Sheather and Jones (1991). 

Note that the recursive bivariate copula approach is approximately as fast as the kernel density approach, with minor differences due to implementation specifics, such as what language the code is written in.  The particle filter approach, while much faster than MCMC, requires storing a great deal of additional information (the ``state vectors" of the filter) and, as a result, takes longer to compute. It should be mentioned that this additional overhead comes with a benefit, which is that the particle method gives full posteriors over model parameters in an online fashion; our approach bypasses those elements in order to directly compute the predictive and is faster as a result.

For this simulation we consider a sample of size $n = 50$ with data generated according to the same recipe as described in the previous section.  An initial four observations are used to ``prime" the predictive distributions;  observations are then introduced one-by-one and a check-loss-optimal prediction is made based on the posterior predictive at each time point, which is then evaluated at the subsequent observation, $j = 5,\ldots,50$. The aggregate check-loss over this period is then computed and stored. This process is repeated for 500 simulations. For this study we consider the tenth percentile, $q = 0.1$. 

For the recursive copula method we take 
$p_0$ a standard Cauchy distribution. We implement the particle learning algorithm using utility functions provided in the {\tt R} package {\tt Bmix} (Taddy 2010). We use default parameter values as given in provided one-dimensional density estimation demo in that package, with 200 particles. For the kernel density method, we re-estimate the bandwidth with every new observation.

As before, we consider the standardized difference:
\begin{equation}
\Delta_q = \frac{\sum_{j=5}^{50} \ell_q(y_j, a_{\text{recursive}}^{(j-1)}) - \sum_{j=5}^{50}\ell_q(y_j, a_{\text{particle}}^{(j-1)})}{\sum_{j=5}^{50}\ell_q(y_j, a_{\text{truth}})},
\end{equation}
where $a^{(k)}$ denotes the inferred optimal action after observing $k$ data points.

\begin{figure}
\begin{center}
\includegraphics[width=3in]{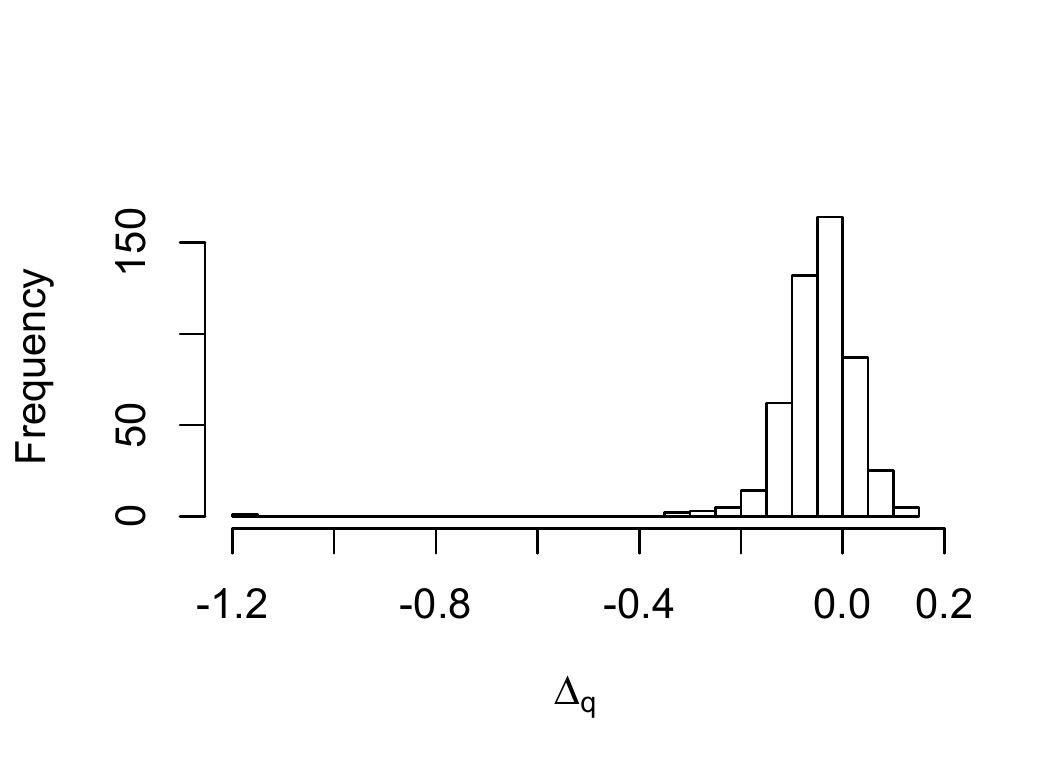}\includegraphics[width=3in]{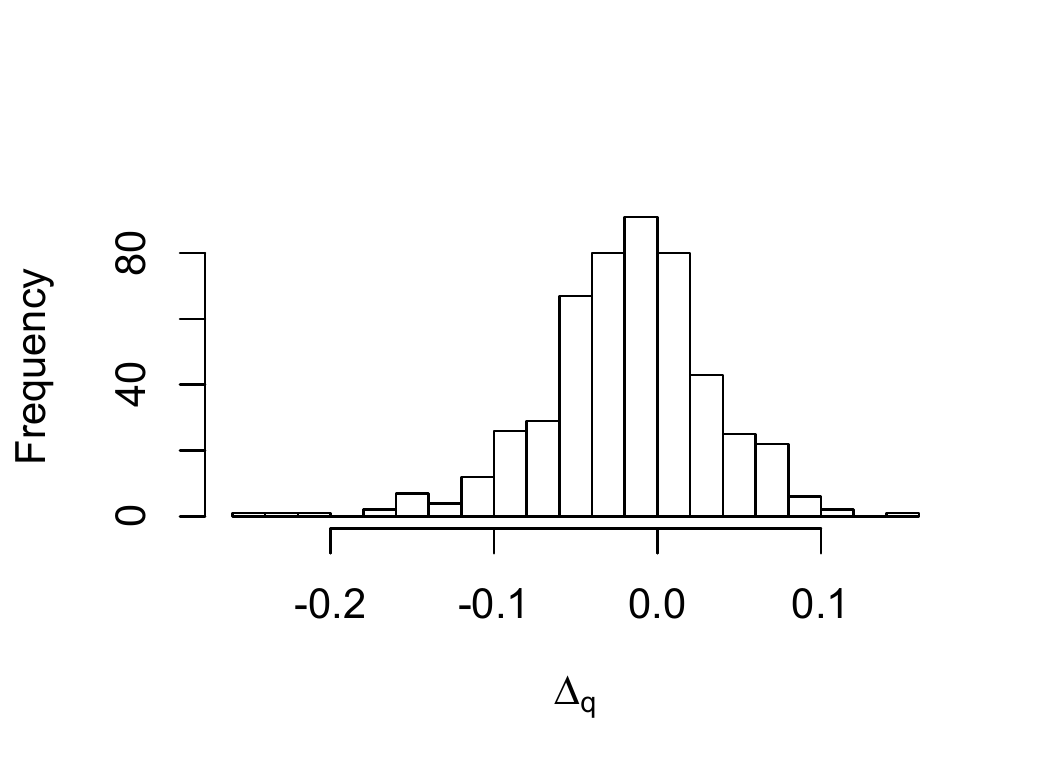}
\end{center}
\caption{Distribution of scaled difference in summed check loss ($q=0.10$) over 500 simulations, for $n = 50$ sequential observations. Negative numbers mean the recursive method outperformed the competing method. Units are in percentage of the theoretical optimal check loss. Left panel compares to Bayesian particle learning mixture model, right panel compares to kernel density estimate with adaptive bandwidth.}
\label{result}
\end{figure}

\begin{table}
\begin{center}
\begin{tabular}{r|rrr}
Comparison method &  Mean & Median & $\mbox{Pr}(\Delta_{0.1} < 0)$ \\
 \hline
Particle learning GMM &  $-$4.6\% & $-$4.3\%& 0.77\\
Kernel density estimate & $-$1.9\% & $-$1.5\% & 0.64 \\

 \end{tabular}
 \caption{Summary statistics of scaled difference in summed check loss ($q=0.10$) for two competing methods over 500 simulated data sets.}
\end{center}
\end{table}

As with the batch simulation study, our claim is not that the bivariate recursive method is outright superior to these alternatives. However, these simulations highlight certain virtues of the approach---speed and ease-of-implementation---while demonstrating that the performance is broadly competitive. Concretely: our method is ten times faster than the {\tt Bmix} package at computing the posterior predictive (with 200 particles) and our main function is 15 lines long, while the functions underlying {\tt Bmix} are many hundreds of lines long. 

It is worth emphasizing that simulation studies such as those reported here are inherently sensitive to prior specification:  after all, attempting to infer the 10th percentile based on only fifty observations is a difficult task that will benefit from wise choices of prior. That said, we argue that the recursive bivariate copula approach has an advantage in terms of being relatively transparent in terms of its prior specification (the initial distribution function $p_0$ can be a convenient parametric form) and its hyper-parameter $\rho$.
Mixture models of any kind do not boast this advantage.

\section{Asymptotic convergence theory}


The recursive algorithm is designed to approximate the posterior predictive under the Dirichlet process mixture model.  When the sample size is large, the posterior predictive agrees with the true data-generating distribution, so it makes sense to investigate the asymptotic convergence properties of the recursive estimator $P_n$ in \eqref{cdf} to the true distribution function $\Ptrue$.   Recall that the proposed algorithm is based on a Gaussian copula via the function $H_\rho$ in \eqref{eq:H.rho}, and throughout we take the copula correlation parameter $\rho \in (0,1)$ to be fixed.  We will also require that the weight sequence $(\alpha_n)$ satisfies
\begin{equation}
\label{eq:weight.sequence}
\alpha_n = a (n+1)^{-1}, \quad n \geq 1,
\end{equation}
for some sufficiently small $a > 0$; see Lemma~\ref{lem:bound} below.  This implies that 
\[ \sum_{i=1}^\infty \alpha_i = \infty \quad \text{and} \quad \sum_{i=1}^\infty \alpha_i^2 < \infty, \]
which is standard in the stochastic approximation literature (e.g., Kushner and Yin 2003).  

In this section, we prove that the recursive predictive distribution sequence $(P_n)$ converges to the true distribution $\Ptrue$ in the Kullback--Leibler sense, with probability~1 as $n \to \infty$.    Towards this, consider the algorithm for the predictive density $p_n(y)$, given by 
\begin{align*}
p_n(y) & = (1-\alpha_n) p_{n-1}(y) + \alpha_n p_{n-1}(y) \, c_\rho(P_{n-1}(y), P_{n-1}(Y_n)) \\
& = p_{n-1}(y)\bigl[ 1 + \alpha_n \{c_\rho(P_{n-1}(y), P_{n-1}(Y_n)) - 1\} \bigr], 
\end{align*}
where $c_\rho(u,v)$ is the bivariate Gaussian copula density \eqref{eq:gauss.copula} with correlation parameter $\rho > 0$ and $P_0$ is an initial guess.  Let $K$ denote the Kullback--Leibler divergence, and $\ptrue$ the true data-generating density; the goal is to show that $K(\ptrue, p_n) \to 0$ $\Ptrue$-almost surely.  Our analysis here is based on that in Martin and Tokdar (2009) for proving consistency of Newton's original predictive recursion algorithm.  However, since there is no natural mixture model structure, some new ideas are needed.  The main ingredient is a representation \eqref{eq:copula.mix} of the Gaussian copula density as a sort of mixture.  

To start, write
\begin{align*}
K(\ptrue, p_n) - K(\ptrue, p_{n-1}) & = -\int \log \frac{p_n(y)}{p_{n-1}(y)} \ptrue(y) \,\d y \\
& = -\int \log\bigl[ 1 + \alpha_n\{c_\rho(P_{n-1}(y), P_{n-1}(Y_n)) - 1\} \bigr] \ptrue(y) \,\d y.
\end{align*}
For $x$ away from $-1$, i.e., $x \approx 0$, the following inequality holds:
\[ \log(1+x) \geq x - 2x^2, \quad x \approx 0. \]
This inequality can be applied in our case, since $c_\rho \geq 0$ and $\alpha_n \to 0$, and it gives 
\[ K(\ptrue, p_n) - K(\ptrue, p_{n-1}) \leq -\alpha_n \int \{c_\rho(P_{n-1}(y), P_{n-1}(Y_n)) - 1\} \ptrue(y) \,\d y + R_n, \]
where the ``remainder term'' $R_n$ is given by
\[ R_n = 2\alpha_n^2 \int \{c_\rho(P_{n-1}(y), P_{n-1}(Y_n)) - 1\}^2 \ptrue(y) \,\d y. \]
Taking conditional expectation with respect to $\A_{n-1}=\sigma(Y_1,\ldots,Y_{n-1})$, we get 
\begin{align}
\E\{K&(\ptrue, p_n) \mid \A_{n-1}\} - K(\ptrue, p_{n-1}) \notag \\ 
& \leq -\alpha_n \int \int \{c_\rho(P_{n-1}(y), P_{n-1}(y')) - 1\} \ptrue(y) \ptrue(y') \,\d y \,\d y' + \E(R_n \mid \A_{n-1}). \label{eq:double}
\end{align}
If the double integral above is positive, and the remainder term is negligible, then $K_n := K(\ptrue, p_n)$ is an ``almost supermartingale'' (Robbins and Siegmund 1971) and converges to an almost sure limit, say, $K_\infty$.  To handle the double integral in \eqref{eq:double}, and to show that the limit is almost surely zero, some manipulation of the copula density $c_\rho$ is needed.  

Traditionally, the copula density is written as in Equation \eqref{eq:gauss.copula} above, which has a relatively simple closed-form expression that is used for practical implementation.  However, for our theoretical analysis, it will be convenient to rewrite the copula density as 
\begin{equation}
\label{eq:copula.mix}
c_\rho(u,v) = \int \psi_\theta(u) \psi_\theta(v) \nm(\theta \mid 0, \rho) \,\d\theta, 
\end{equation}
where $\psi_\theta$ is a ratio of normal densities, 
\[ \psi_\theta(u) = \frac{\nm(\Phi^{-1}(u) \mid \theta, 1-\rho)}{\nm(\Phi^{-1}(u) \mid 0, 1)}. \]
This follows from routine calculations using normal convolutions.  The point is that the Gaussian copula has a type of mixture or ``conditionally iid'' representation.

Kullback--Leibler consistency also requires two preliminary results; see Appendix~\ref{S:proofs} for the proofs.  For the first, write $T(p_n)$ for that double integral on the right-hand side of \eqref{eq:double}, i.e., 
\[ T(p) = \int \int \{c_\rho(P(y), P(y')) - 1\} \ptrue(y) \ptrue(y') \,\d y \,\d y', \]
where $p$ is a generic density with distribution function $P$.  If we plug in the alternative representation \eqref{eq:copula.mix} of the copula density into the formula for $T(p)$ and interchange the order of integration, we get 
\begin{align*}
T(p) & = \int \Bigl[ \Bigl\{ \int \psi_\theta(P(y)) \ptrue(y) \,\d y \Bigr\}^2 - 1\Bigr] \nm(\theta \mid 0, \rho) \,\d\theta \\
& = \int \Bigl\{ \int \psi_\theta(P(y)) \ptrue(y) \,\d y - 1\Bigr\}^2 \nm(\theta \mid 0, \rho) \,\d\theta, 
\end{align*}
where the last expression follows from the formula $\E(X^2) - \E^2(X) = \E\{X-\E(X)\}^2$ and the fact that $\int \psi_\theta(u) \nm(\theta \mid 0, \rho) \, \d\theta = 1$ for all $u$.  

\begin{lem}
\label{lem:positive}
Consider a density $p$ whose support contains that of $\ptrue$.  Then $T(p) \geq 0$ with equality if and only if $p=\ptrue$ Lebesgue-almost everywhere.  
\end{lem}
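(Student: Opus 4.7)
The plan is to exploit the mixture representation \eqref{eq:copula.mix} of the Gaussian copula to write $T(p)$ as an integral of a square against a positive measure, then reduce the equality case to a uniqueness statement for a Gaussian convolution. The non-negativity $T(p) \geq 0$ is immediate from the pre-displayed identity
\[ T(p) = \int \Bigl\{ \int \psi_\theta(P(y)) \ptrue(y) \,\d y - 1\Bigr\}^2 \nm(\theta \mid 0, \rho) \,\d\theta, \]
since the integrand is non-negative. So the real content is the equality case.

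For the equality case, $T(p)=0$ forces $\int \psi_\theta(P(y)) \ptrue(y) \,\d y = 1$ for Lebesgue-almost every $\theta \in \RR$. The next step is a change of variables. Under the hypothesis that $\supp(\ptrue) \subset \supp(p)$, the density
\[ q(u) = \frac{\ptrue(P^{-1}(u))}{p(P^{-1}(u))}, \quad u \in (0,1), \]
is well-defined and satisfies $\int_0^1 q(u)\,\d u = 1$; moreover $\int \psi_\theta(P(y))\ptrue(y)\,\d y = \int_0^1 \psi_\theta(u) q(u)\,\d u$. Substituting $x = \Phi^{-1}(u)$ in the latter and using the explicit form of $\psi_\theta$, this integral becomes $\int \nm(x \mid \theta, 1-\rho)\, R(x)\,\d x$, where $R(x) := q(\Phi(x))$. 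Since $\int \nm(x\mid\theta,1-\rho)\,\d x = 1$, the vanishing condition becomes
\[ \int \nm(x \mid \theta, 1-\rho)\,\{R(x) - 1\}\,\d x = 0 \quad \text{for a.e.\ } \theta. \]

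The crux is now to conclude $R \equiv 1$ a.e.\ from this. The plan is to invoke the completeness of the Gaussian location family via a Laplace-transform argument. Expanding the Gaussian kernel,
\[ e^{-\theta^2/(2(1-\rho))} \int \{R(x)-1\}\, e^{-x^2/(2(1-\rho))}\, e^{\theta x/(1-\rho)}\,\d x = 0 \quad \text{for all } \theta \in \RR, \]
so the bilateral Laplace transform of $\{R(x)-1\} e^{-x^2/(2(1-\rho))}$ vanishes identically on $\RR$. To apply the uniqueness theorem, I need to check absolute integrability for all real $\theta$; this is where the Gaussian tail helps decisively. Since $0<\rho<1$, one has $e^{-x^2/(2(1-\rho))} \leq C\,\phi(x)$ for all $x$, and by construction $\int R(x)\phi(x)\,\d x = \int_0^1 q(u)\,\d u = 1$, so $R\,\phi \in L^1$. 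The exponential tilt $e^{\theta x/(1-\rho)}$ is absorbed by the faster Gaussian decay after choosing a suitable auxiliary factor, giving a valid Laplace inversion on every strip, hence $R(x) = 1$ for Lebesgue-a.e.\ $x$. Translating back, $q(u) = 1$ a.e.\ on $(0,1)$, i.e., $P(Y)\sim\unif(0,1)$ when $Y\sim\ptrue$, which forces $P = \Ptrue$ and therefore $p = \ptrue$ Lebesgue-a.e.

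The main obstacle is the integrability bookkeeping in the Laplace-transform step: $R$ itself need not be integrable, and the only tameness we have is $R\cdot\phi \in L^1$ together with $R\geq 0$. Care is required to justify the exchange of integration in the expansion of the Gaussian kernel for every real $\theta$ simultaneously; the comparison $e^{-x^2/(2(1-\rho))}\le C\phi(x)$, combined with writing the tilted integrand as $R(x)\phi(x)\cdot[\phi(x)^{-1}e^{-x^2/(2(1-\rho))+\theta x/(1-\rho)}]$ and noting the bracketed factor is bounded for each fixed $\theta$, is the argument I would use to close this gap cleanly. Everything else is routine manipulation.
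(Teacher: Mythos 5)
Your proof is correct and follows essentially the same route as the paper: both exploit the mixture representation \eqref{eq:copula.mix} to make non-negativity immediate, then transfer to the $z=\Phi^{-1}(P(y))$ scale and invoke uniqueness/completeness of the normal location family $\{\nm(\theta,1-\rho):\theta\in\RR\}$ to force the density ratio to equal one almost everywhere. The only difference is cosmetic: where the paper cites completeness as a known fact, you unpack it via a bilateral Laplace-transform argument and carefully verify the needed absolute integrability using $\int R\phi = 1$ and the Gaussian comparison $e^{-x^2/(2(1-\rho))}\lesssim\phi(x)$, which is a sound and somewhat more self-contained justification of the same step.
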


Our second preliminary result demonstrates that the remainder term $R_n$ is negligible, i.e., it vanishes sufficiently fast that it does not disrupt the supermartingale-like dynamics of Kullback--Leibler sequence $K_n = K(\ptrue, p_n)$.  

\begin{lem}
\label{lem:bound}
Write $\bar P_0 = 1-P_0$.  Suppose that $P_0$ and $\ptrue$ satisfy
\begin{equation}
\label{eq:integrable}
\int \{ P_0(y) \wedge \bar P_0(y) \}^{-2\rho/(1+\rho)} \ptrue(y) \,\d y < \infty. 
\end{equation}
Furthermore, assume that $a$ in \eqref{eq:weight.sequence} satisfies 
\[ 0 < a < \frac{2\rho + 2}{7\rho + 1} . \]
Then $\sum_n \E(R_n \mid \A_{n-1}) < \infty$ $\Ptrue$-almost surely.
\end{lem}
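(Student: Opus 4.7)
The plan is to show $\E(R_n\mid\A_{n-1})$ decays at least as fast as $n^{-1-\delta}$ for some $\delta>0$, so that summability follows. Since
\[ \E(R_n\mid\A_{n-1}) = 2\alpha_n^2 \int\int \{c_\rho(P_{n-1}(y), P_{n-1}(y'))-1\}^2 \ptrue(y)\ptrue(y')\,\d y\,\d y' \]
and $\alpha_n^2 \asymp n^{-2}$ under \eqref{eq:weight.sequence}, the task reduces to showing the double integral grows no faster than $n^{1-\delta}$. Using $(c-1)^2 \leq c^2 + 1$ (valid since $c\geq 0$), the ``$+1$'' piece is already summable against $\alpha_n^2$, so the real work is in bounding the $c_\rho^2$ piece.

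The first key step is a pointwise tail bound on the Gaussian copula. From \eqref{eq:gauss.copula}, the exponent $\{2\rho xz-\rho^2(x^2+z^2)\}/\{2(1-\rho^2)\}$, with $x=\Phi^{-1}(u)$ and $z=\Phi^{-1}(v)$, can be bounded using $|2\rho xz|\leq \rho(x^2+z^2)$ to give
\[ c_\rho(u,v) \leq (1-\rho^2)^{-1/2}\exp\!\Bigl\{\tfrac{\rho}{2(1+\rho)}(x^2+z^2)\Bigr\}. \]
Combining with the standard Mills-ratio estimate $x^2 \lesssim -2\log\{u\wedge(1-u)\}$ (up to a logarithmic prefactor that can be absorbed into a slightly larger exponent) yields a bound of the form
\[ c_\rho(u,v) \leq C_\rho \{u\wedge(1-u)\}^{-\rho/(1+\rho)} \{v\wedge(1-v)\}^{-\rho/(1+\rho)}. \]

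The second key step is to propagate a deterministic lower bound on $P_{n-1}\wedge\bar P_{n-1}$. Since $H_\rho$ takes values in $[0,1]$, the recursion \eqref{cdf} implies $P_n\geq (1-\alpha_n)P_{n-1}$ and $\bar P_n\geq (1-\alpha_n)\bar P_{n-1}$; iterating and using $\prod_{i=1}^n(1-\alpha_i)\asymp n^{-a}$ under \eqref{eq:weight.sequence} gives
\[ P_{n-1}(y)\wedge\bar P_{n-1}(y)\geq c\,n^{-a}\{P_0(y)\wedge\bar P_0(y)\}, \quad n\geq 1. \]
Feeding this into the pointwise copula bound, squaring, and applying Tonelli leaves
\[ \int\!\!\int c_\rho(P_{n-1}(y), P_{n-1}(y'))^2\,\ptrue\ptrue \leq C'\,n^{4a\rho/(1+\rho)}\Bigl[\int\{P_0\wedge\bar P_0\}^{-2\rho/(1+\rho)}\ptrue\Bigr]^2, \]
and the bracketed integral is finite by hypothesis \eqref{eq:integrable}. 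Hence $\E(R_n\mid\A_{n-1})\lesssim n^{-2+4a\rho/(1+\rho)}$, which is summable once $a$ lies below an explicit $\rho$-dependent threshold.

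The main technical obstacle is calibrating constants tightly enough to reach the stated threshold $a<(2\rho+2)/(7\rho+1)$; the crude argument above only yields the weaker sufficient condition $a<(1+\rho)/(4\rho)$. Closing the gap should be possible by retaining the linear cross-term $-2c_\rho$ rather than discarding it via $(c-1)^2\leq c^2+1$, or, more elegantly, by exploiting the mixture representation \eqref{eq:copula.mix} before squaring so that one factor of the tail blow-up is absorbed against the inner Gaussian mixing measure $\nm(\theta\mid 0,\rho)$ instead of being paid for twice.
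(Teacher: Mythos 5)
Your proof is correct, and it proves the lemma. The overall architecture is the same as the paper's: bound $c_\rho^2$ pointwise by a product of two tail factors, propagate the deterministic lower bound $P_{n}\wedge\bar P_{n}\geq\{P_0\wedge\bar P_0\}\prod_{i\leq n}(1-\alpha_i)\asymp n^{-a}\{P_0\wedge\bar P_0\}$, and sum. Where you differ is in how the pointwise bound is derived: the paper routes through the mixture representation \eqref{eq:copula.mix}, applies Cauchy--Schwarz to get $c_\rho(u,v)^2\leq\int\psi_\theta(u)^2\,\nm(\theta\mid0,\rho)\,\d\theta\cdot\int\psi_\theta(v)^2\,\nm(\theta\mid0,\rho)\,\d\theta$, and evaluates the factors via a normal convolution; you instead bound the exponent of \eqref{eq:gauss.copula} directly using $|2\rho xz|\leq\rho(x^2+z^2)$. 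Both routes land on the same exponent $\rho/(1+\rho)$, so this is a stylistic, slightly more elementary path to the same estimate, not a genuinely different strategy.

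Two specific remarks. First, the qualifier ``up to a logarithmic prefactor that can be absorbed into a slightly larger exponent'' should be dropped: Inglot (2010, Theorem 2.1) — which the paper cites precisely here — gives the clean inequality $|\Phi^{-1}(u)|\leq\{-2\log(u\wedge\bar u)\}^{1/2}$ with no logarithmic correction, and you need the exponent to be exactly $2\rho/(1+\rho)$ so that the resulting integral is controlled by the hypothesis \eqref{eq:integrable}; inflating the exponent even slightly would ask for more integrability than the statement grants. Second, your closing paragraph is backwards and should be deleted: for $\rho\in(0,1)$ one has
\[
\frac{1+\rho}{4\rho} \;>\; \frac{2(1+\rho)}{7\rho+1},
\]
so the sufficient condition $a<(1+\rho)/(4\rho)$ that your argument delivers is \emph{less} restrictive than the hypothesis $a<(2\rho+2)/(7\rho+1)$ in the lemma, and the stated result follows a fortiori. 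There is no gap to close; if anything, your cleaner bookkeeping gives a marginally better constant than the one the paper records.
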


The integrability condition \eqref{eq:integrable} can be understood as a requirement that the recursive algorithm's initialization cannot be too light tailed compared to $\ptrue$; this is consistent with our choice in Section~4 to use a heavy-tailed $P_0$.  

\begin{thm}
\label{thm:limit}
Let $p_n$ be the predictive density for $Y_{n+1}$, given $Y_1,\ldots,Y_n$ defined above, with correlation parameter $\rho \in (0,1)$ and with weight sequence $(\alpha_n)$ that satisfies \eqref{eq:weight.sequence}.  If the true density $\ptrue$ is continuous and satisfies \eqref{eq:integrable} for the given $P_0$, 
then $K(\ptrue, p_n) \to 0$ $\Ptrue$-almost surely.  
\end{thm}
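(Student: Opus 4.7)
The plan is to build on the almost-supermartingale framework already prepared in the excerpt. Taking the inequality
\[
\E\{K_n \mid \A_{n-1}\} \leq K_{n-1} - \alpha_n T(p_{n-1}) + \E(R_n \mid \A_{n-1}),
\]
with $K_n = K(\ptrue, p_n)$, and combining Lemma~\ref{lem:positive} (which gives $T(p_{n-1}) \geq 0$) with Lemma~\ref{lem:bound} (which gives $\sum_n \E(R_n \mid \A_{n-1}) < \infty$ a.s.), I would apply the Robbins--Siegmund almost-supermartingale convergence theorem to conclude that, $\Ptrue$-almost surely, $K_n$ converges to some finite nonnegative limit $K_\infty$ and $\sum_n \alpha_n T(p_{n-1}) < \infty$.

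Because $\sum_n \alpha_n = \infty$ by \eqref{eq:weight.sequence}, summability of $\alpha_n T(p_{n-1})$ forces $\liminf_n T(p_n) = 0$ almost surely. I would then extract a (random) subsequence $(n_k)$ along which $T(p_{n_k}) \to 0$. Using the mixture representation \eqref{eq:copula.mix}, $T(p)$ equals the variance (under $\theta \sim \N(0,\rho)$) of the map $\theta \mapsto F(\theta;p) := \int \psi_\theta(P(y)) \ptrue(y)\,\d y$; thus $F(\theta; p_{n_k}) \to 1$ for $\N(0,\rho)$-almost every $\theta$, possibly after passing to a further subsequence. This is the same quantity that equals $1$ identically when $P = \Ptrue$, since the change of variables $u=\Ptrue(y)$ reduces $F(\theta;\ptrue)$ to $\int \psi_\theta(u)\,\d u = 1$.

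The main obstacle is turning this $\theta$-wise convergence of $F(\theta;p_{n_k})$ into convergence $K(\ptrue, p_{n_k}) \to 0$. My plan is to view $F(\theta; p_{n_k})$ as (up to a Gaussian factor) a Laplace-type transform in $\theta$ of the law of $Z_k = \Phi^{-1}(P_{n_k}(Y))$ under $Y \sim \ptrue$, whose limiting value $F(\theta;\ptrue) \equiv 1$ characterizes $Z_k \Rightarrow \N(0,1)$. Monotonicity of each $P_{n_k}$ together with weak convergence of $Z_k$ upgrades this to pointwise convergence $P_{n_k}(y) \to \Ptrue(y)$ at every $y$ (using continuity of $\ptrue$). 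A separate argument, exploiting the explicit recursive form \eqref{cdf} and the tail control provided by \eqref{eq:integrable}, would be used to show that the $p_{n_k}$ are uniformly integrable against $\ptrue$ with a majorant that also controls $\log p_{n_k}$, so that dominated convergence upgrades the pointwise convergence to $K(\ptrue, p_{n_k}) \to 0$.

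Finally, since the full sequence $K_n$ converges a.s.\ to $K_\infty$ while the subsequence $K_{n_k} \to 0$, we must have $K_\infty = 0$, yielding $K(\ptrue, p_n) \to 0$ almost surely. I expect the hardest step to be the identifiability/compactness argument in the previous paragraph: $T$ does not provide a direct lower bound on $K$, so the passage from $T(p_{n_k}) \to 0$ to $K(\ptrue, p_{n_k}) \to 0$ must be routed through weak convergence of $P_{n_k}$ plus a tightness/domination estimate on the densities, and it is here that the initialization hypothesis \eqref{eq:integrable} and the fixed nature of $\rho \in (0,1)$ will play the essential role.
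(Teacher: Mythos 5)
Your opening is the same as the paper's: apply Robbins--Siegmund via Lemmas~\ref{lem:positive}--\ref{lem:bound} to get $K_n \to K_\infty$ and $\sum_n \alpha_n T(p_n) < \infty$ almost surely. Where you diverge is the crucial step of concluding $K_\infty = 0$. The paper argues by contradiction: it assumes $K_\infty > 0$ with positive probability, notes $p_n \neq \ptrue$ on a set of positive measure for all large $n$, invokes Lemma~\ref{lem:positive} to get $T(p_n) > 0$, and then asserts that $T(p_n)$ is \emph{bounded away from zero}, whence $\sum_n \alpha_n T(p_n) = \infty$ since $\sum_n \alpha_n = \infty$. You instead extract a subsequence on which $T(p_{n_k}) \to 0$, reinterpret $T(p) = \mathrm{Var}_{\theta \sim \N(0,\rho)}\{F(\theta;p)\}$ with $F(\theta;p)$ a Gaussian-smoothed transform of the law of $\Phi^{-1}(P(Y))$, and aim to read off weak convergence $P_{n_k} \Rightarrow \Ptrue$, then upgrade to KL convergence via uniform integrability, and finally invoke convergence of the full sequence $K_n$ to transfer the subsequential limit $0$ to $K_\infty$. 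This is a genuinely different, and in principle more careful, route: the paper's phrase ``Since $T(p_n)$ is bounded away from zero'' is doing a lot of work that is not visibly justified -- $T(p_n) > 0$ for each $n$ and $K(\ptrue,p_n)$ bounded away from zero do not by themselves prevent $T(p_n) \to 0$, because $T$ controls only a Gaussian-smoothed notion of discrepancy and there is no lower bound relating $T$ to $K$. You have correctly located exactly this as the ``hardest step.''

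That said, your proposal does not yet close that gap either. The two unproven ingredients are (i) that pointwise (or $L^2(\N(0,\rho))$-) convergence $F(\theta;p_{n_k}) \to 1$ forces $\Phi^{-1}(P_{n_k}(Y)) \Rightarrow \N(0,1)$ -- you invoke ``Laplace-type transform'' uniqueness, but $F(\theta;p)$ is a Gaussian smoothing of a \emph{density ratio} $\ptrue/p_{n_k}$ evaluated along the random warp $P_{n_k}^{-1}\circ\Phi$, not a classical moment generating function, so the identifiability step needs an explicit argument (the completeness-of-normal-location-family idea from Lemma~\ref{lem:positive} only gives uniqueness at the \emph{limit}, not a stability statement for a sequence); and (ii) the domination/UI estimate needed to pass from $P_{n_k} \to \Ptrue$ pointwise to $K(\ptrue,p_{n_k})\to 0$, which you only sketch. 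You gesture at \eqref{eq:integrable} and the explicit recursion \eqref{cdf} for this, which is the right place to look -- indeed the bounds $P_n \geq P_0 \prod_{i\le n}(1-\alpha_i)$ and $\bar P_n \geq \bar P_0 \prod_{i\le n}(1-\alpha_i)$ used in the proof of Lemma~\ref{lem:bound} give exactly the kind of two-sided envelope control one would want -- but as written this is a plan, not a proof. So: correct framework, correct diagnosis of where the paper's argument is thin, but your alternative also stops short of a complete argument at precisely the step both proofs must confront.
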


\begin{proof}
From the expression for $\E(K_n \mid \A_{n-1}) - K_{n-1}$, and Lemmas~\ref{lem:positive}--\ref{lem:bound}, it follows from Robbins and Siegmund (1971) that 
\[ K_n \to K_\infty \quad \text{and} \quad \sum_n \alpha_n T(p_n) < \infty, \quad \text{$\Ptrue$-almost surely}. \]
It remains to show that $K_\infty = 0$ $\Ptrue$-almost surely.  Suppose, to the contrary, that $K_\infty > 0$ with positive probability.  Then $p_n$ is away from $\ptrue$ (in the Kullback--Leibler sense) for all but finitely many $n$ with positive probability.  More precisely, there is a set of positive Lebesgue measure on which $p_n \neq \ptrue$.  By Lemma~\ref{lem:positive}, this implies $T(p_n) > 0$ for all but finitely many $n$.  Since $T(p_n)$ is bounded away from zero, we get $\sum_n \alpha_n T(p_n) = \infty$ with positive probability, which contradicts the second conclusion in the above display.  Therefore, we must have $K_\infty = 0$ almost surely, completing the proof.
\end{proof}

\section{Conclusion}

In this paper, we have identified an interesting new connection between Bayesian predictive updates and well-known bivariate copulas.  Besides the new light cast on this previously unknown connection between Bayesian inference and copulas, which can provide further and deeper insights and understanding about both, this development makes clear that Bayesian predictive updates do not require posterior computations.  This opens the door for online Bayesian prediction, as well as for Bayesian predictive analysis for researchers who are uncomfortable with the implementation and/or slow speeds of Markov chain Monte Carlo methods.  

The new recursive algorithm developed here is important because it provides a direct attack on the predictive density, which can simplify both the modeling and the computational aspects in applications.  First, if the predictive is the goal, then needing to specify a mixture model, especially, a support for the mixing distribution, is undesirable, and the new algorithm circumvents this.  Second, Newton's original algorithm requires computation of a normalizing constant at each iteration, and these are never available in closed-form.  For mixing distributions supported on one- or two-dimensional spaces this can easily be handled with quadrature but, to date, there is no efficient strategy for computing these normalizing constants for higher-dimensional spaces.  Again, the new version that directly attacks the predictive distribution avoids all of these difficulties.  


\section*{Acknowledgments}

The authors thank the Editor, Associate Editor, and referees for their helpful comments on the previous version of this manuscript.  This work is partially supported by the U.~S.~National Science Foundation, grants DMS--1507073 and DMS--1506879, and by the U.~S.~Army Research Offices, Award \#W911NF-15-1-0154.

\appendix
\section{Extension to bivariate data}
The recursive update applied to bivariate data is
\begin{equation}
\begin{split}
P_n(y,x) & = (1-\alpha_n)\,P_{n-1}(y,x) \\
& \qquad + \alpha_n\,H_\rho(P_{n-1}(y,x),P_{n-1}(y_n,x)) H_\rho(P_{n-1}(y,x),P_{n-1}(y,x_n)). 
\end{split}
\end{equation}
so long as the initial kernel $P_0$ is independent so that $P_0(x,y) = P_0(x)P_0(y)$. 

Here we demonstrate this update applied to air quality data, an example taken again from the {\tt R} package {\tt DPpackage} (Jara et al 2011). As before, we compare to both a Dirichlet process mixture of normals, as well as a P\'olya tree mixture.  The data are daily ozone and solar radiation measurements taken in New York between May through September in 1973 as recorded by the New York State Department of Conservation and the National Weather Service, respectively. The data consist of 153 pairs, but only 111 have both values, and these are used in the fits shown in Figure~3. The displayed recursive approximation is a point-wise average of ten random permutations of the data. Visually, note that the recursive approximation estimate is smoother than the P\'olya tree estimate, but that its overall shape is more similar to the P\'olya tree estimate than to the Dirichlet process estimate. Notably, the recursive approximation took seconds to fit, while the other methods required minutes-long MCMC runs.

\begin{figure}
\begin{center}
\includegraphics[width=6in]{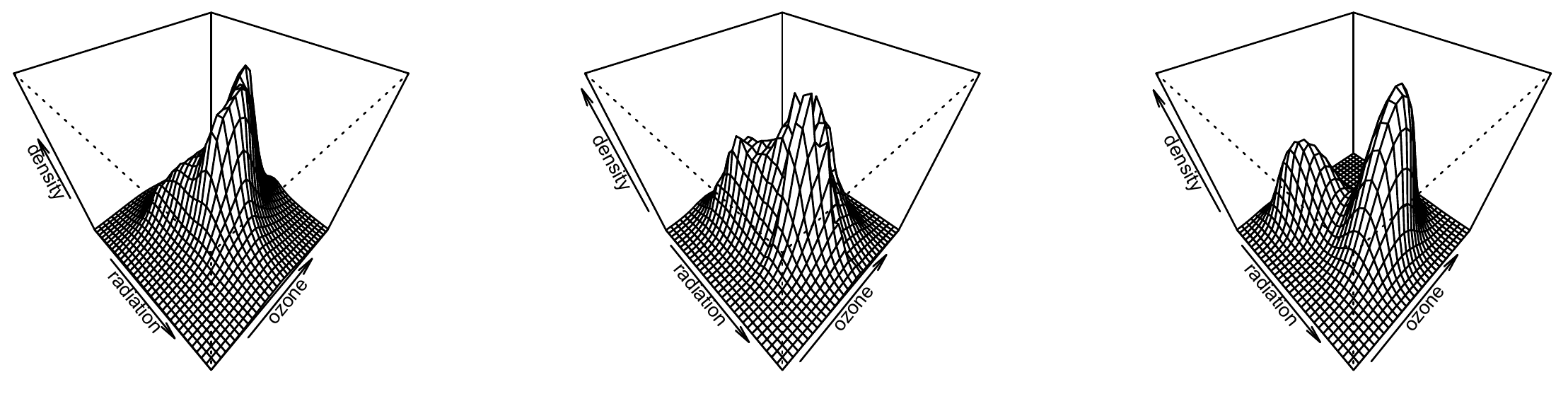}
\caption{Fit to 111 air quality measurements, the  recursive approximation (left), a P\'olya tree mixture model (center), and a Dirichlet process mixture of normals (right), give broadly similar fits, although not as similar as in the one-dimensional example.}
\end{center}
\end{figure}

\section{Technical details and proofs}

\subsection{Generalization of Example~1}
\label{S:expfam}

Write $F(\cdot)=F(\cdot \mid\theta)$ for the distribution function corresponding to the model density $f(\cdot \mid \theta)$.  Suppose that $F$ can be written as
$$F(y)=\int^y_{-\infty} \xi(s)\,\phi(s)\,\d s$$
for non-negative functions $\xi$ and $\phi$.  Then a symmetric copula distribution can be defined via
$$C(u,v)=\int^{G(u)}_{-\infty}\int^{G(v)}_{-\infty} \xi(s)\,\xi(t)\,\widetilde{\phi}(t+s)\,\d s\,\d t,$$
for some function $\widetilde{\phi}$, where $G=F^{-1}$, provided
\begin{equation}
\int^\infty_{-\infty} \xi(s)\,\widetilde{\phi}(t+s)\,\d s =\phi(t).\label{condi}
\end{equation}
Define
$$d_n(z)=\int \exp\{(z+\lambda)\theta-(n+\tau)b(\theta)\}\,\d\theta,$$
so
$$P_{n-1}(y)=\int^y_{-\infty} \xi(s)\,\frac{d_{n}(s+\mu)}{d_{n-1}(\mu)}\,\d s,$$
where $\mu=\sum_{i=1}^{n-1}y_i$.  Also,
$$C(P_{n-1}(x),\,P_{n-1}(y))=\int^x_{-\infty}\int^y_{-\infty} \xi(s)\,\xi(t)\frac{d_{n+1}(s+t+\mu)}{d_{n-1}(\mu)}\,\d s\,\d t.$$
The aim now is to show that for the functions $\phi = \phi_{n,\mu}$ and $\widetilde\phi=\widetilde\phi_{n,\mu}$ with 
$$\phi(s)=\frac{d_{n}(s+\mu)}{d_{n-1}(\mu)}\quad\mbox{and}\quad \widetilde{\phi}(s+t)=\frac{d_{n+1}(s+t+\mu)}{d_{n-1}(\mu)}$$
we get that $(\phi,\widetilde{\phi})$ satisfy (\ref{condi}).  To this end, we have that
$$\int^\infty_{-\infty} \xi(t)\,d_{n+1}(t+s+\mu)\,\d t$$ 
is given by
$$
\int^\infty_{-\infty} \int \xi(t)\,\exp\{t\theta-b(\theta)\}\,
\exp\{(s+\mu+\lambda)\theta-(n+\tau)b(\theta)\}\,\d\theta$$
which is easily seen to be $d_n(s+\mu)$, and hence (\ref{condi}) holds.

Thus, the new class of symmetric copula is given by
$$C(u,v)=\int_{-\infty}^{G(u)}\int_{-\infty}^{G(v)}\xi(s)\,\xi(t)\,\widetilde{\phi}(t+s)\,\d s\,\d t,$$
with 
$$\int_{-\infty}^\infty \xi(s)\,\widetilde{\phi}(s+t)\,\d s=G^{-1}(t).$$
The corresponding copula density function is given by
$$c(u,v)=g(u)\,g(v)\,\xi(G(u))\,\xi(G(v))\,\widetilde{\phi}(G(u)+G(v)),$$
where $g=G'$. To see this more clearly, let $\xi \equiv 1$, so
$$c(u,v)=\frac{\widetilde{\phi}(G(u)+G(v))}{\phi(G(u))\,\phi(G(v))}$$
and now $\phi$ is the density function corresponding to $F$. If $\phi'$ is negative, i.e. $F$ is concave, then $\widetilde{\phi}=(-F)''$, giving
$$c(u,v)=\frac{(-F)''(F^{-1}(u)+F^{-1}(v))}{F'(F^{-1}(u))\,F'(F^{-1}(v))}.$$
This is an Archimedean copula and hence the new class of copula provides a generalization. Note that for the exponential model considered in Example 1, we did have $\xi\equiv 1$ and hence we recovered an Archimedean copula.

\subsection{Proofs from Section~4}
\label{S:proofs}

\begin{proof}[Proof of Lemma~\ref{lem:positive}]
That $T(p)$ is non-negative is clear.  Moreover, we have 
\[ T(p) = 0 \iff \int \psi_\theta(P(y)) \ptrue(y) \,dy = 1 \quad \text{for Lebesgue-almost all $\theta$}. \]
It is easy to check, using the formula for $\psi_\theta$ and a change of variable, that $T(\ptrue) = 0$.  That $p=\ptrue$ is the unique (almost everywhere) solution follows from completeness of the normal mean family.  Indeed, if we make a change of variable $z=\Phi^{-1}(P(y))$, then 
\[ \int \psi_\theta(P(y)) \ptrue(y) \,\d y = \int \frac{\ptrue(P^{-1}(\Phi(z)))}{p(P^{-1}(\Phi(z)))} \nm(z \mid \theta, 1-\rho) \,\d z. \]
The condition that the integral above equals 1 for Lebesgue-almost all $\theta$ implies, by completeness of the normal mean family $\{\nm(\theta, 1-\rho): \theta \in \RR\}$, that the ratio in the integrand equals 1 for Lebesgue-almost all $z$, that is, 
\[ p(P^{-1}(\Phi(z))) = \ptrue(P^{-1}(\Phi(z))), \quad \text{for Lebesgue-almost all $z$}. \]
The claim that $p=\ptrue$ Lebesgue-almost everywhere follows immediately. 
\end{proof}

\begin{proof}[Proof of Lemma~\ref{lem:bound}]
Recall that the remainder term $R_n$ is given by 
\[ R_n = 2\alpha_n^2 \int \bigl[ c_\rho(P_{n-1}(y), P_{n-1}(Y_n)) - 1 \bigr]^2 \ptrue(y) \,\d y. \]
To get a handle on the conditional expectation $\E(R_n \mid \A_{n-1})$, it suffices to bound 
\[ \xi_n = \int \int c_\rho(P_{n-1}(y), P_{n-1}(y'))^2 \ptrue(y) \ptrue(y') \,\d y \,\d y'. \]
Using the formula \eqref{eq:copula.mix} for $c_\rho$ and Cauchy--Schwartz, we have 
\begin{align*}
c_\rho(u,v)^2 & = \Bigl\{ \int \psi_\theta(u) \psi_\theta(v) \nm(\theta \mid 0, \rho) \,\d\theta \Bigr\}^2 \\
& \leq \int \psi_\theta(u)^2 \nm(\theta \mid 0, \rho) \,\d\theta \int \psi_\theta(v)^2 \nm(\theta \mid 0, \rho) \,\d\theta. 
\end{align*}
Write $z_u = \Phi^{-1}(u)$ and note that, since 
\[ \int \nm(z_u \mid \theta, 1-\rho)^2 \nm(\theta \mid 0, \rho) \,\d\theta \propto \nm(z_u \mid 0, \tfrac{1+\rho}{2}), \]
and $\nm(z_u \mid 0, 1)^2 \propto \nm(z_u \mid 0, \frac12)$, we have 
\[ \int \psi_\theta(u)^2 \nm(\theta \mid 0, \rho) \, \d\theta \propto \frac{\nm(z_u \mid 0, \frac{1+\rho}{2})}{\nm(z_u \mid 0, \frac12)} \propto e^{\lambda z_u^2}, \]
where $\lambda = \rho/(1+\rho)$.  It follows from Inglot (2010, Theorem 2.1) that 
\[ |z_u| \leq \{-2 \log(u \wedge \bar u)\}^{1/2}, \quad u \in (0,1), \quad \bar u = 1-u. \] 
Applying this to the previous expression, we have that 
\[ c_\rho(u,v)^2 \lesssim ( u \wedge \bar u )^{-2\lambda} \cdot ( v \wedge \bar v )^{-2\lambda} \]
and, consequently, 
\[ \xi_n \lesssim \Bigl[ \int \{ P_{n-1}(y) \wedge \bar P_{n-1}(y) \}^{-2\lambda} \ptrue(y) \,\d y \Bigr]^2. \]
Consider the following trivial bounds:
\[ P_n(y) \geq P_0(y) \prod_{i=1}^n (1-\alpha_i) \quad \text{and} \quad \bar P_n(y) \geq \bar P_0(y) \prod_{i=1}^n (1-\alpha_i). \]
Substituting these bounds into the integral in the above display, we get the upper bound
\[ \Bigl\{ \prod_{i=1}^n (1-\alpha_i) \Bigr\}^{-2\lambda} \int \{P_0(y) \wedge \bar P_0(y)\}^{-2\lambda} p^\star(y) \,dy. \]
The integral is finite by \eqref{eq:integrable}.  The product term is upper-bounded by 
\[ \exp\Bigl\{\frac{2\lambda}{1-\alpha_1} \sum_{i=1}^n \alpha_i \Bigr\}, \]
and, since $\sum_{i=1}^n \alpha_i \sim a \log n$, in order for 
\[ \sum_n \alpha_n^2 \exp\Bigl\{\frac{4\lambda}{1-\alpha_1} \sum_{i=1}^n \alpha_i \Bigr\} < \infty \]
we need the constant $a > 0$ on the weights to satisfy 
\[ \frac{6\lambda a}{2-a} < 1. \]
The condition for $a$ in the statement of the lemma is based on solving this inequality, so this holds by assumption.  This implies that $\sum_n \E(R_n \mid \A_{n-1})$ converges $\Ptrue$-almost surely, proving the claim.  
\end{proof}

\section*{References}

\begin{description}

\item Bernardo, J.~M.~and Smith, A.~F.~M.~(1994). \emph{Bayesian Theory}. Wiley.




\item Carvalho, C.M., Lopes, H.F., Polson, N.G., and Taddy, M.A.~(2010). Particle learning for general mixtures. \emph{Bayesian Analysis}. {\bf 5}, 709--740.

\item Caudle, K.~A.~and Wegman, E.~(2009). Nonparametric density estimation of streaming data using orthogonal series.  \emph{Computational Statistics \& Data Analysis} {\bf 53}, 3980--3986.

\item Clayton, D.~G.~(1978). A model for association in bivariate life tables and its application in epidemiological studies of familial tendency in chronic disease incidence. \emph{Biometrika} \textbf{65}, 141--151. 

\item Creal, D.~(2012). A survey of sequential Monte Carlo methods for economics and finance. \emph{Econometric Reviews} {\bf 31}, 245--296.

\item de Finetti, B.~(1937).    La pr\'evision: ses lois logiques, ses
sources subjectives. \emph{Annals of the Institute of Henri Poincar$\acute{e}$}, {\bf 7}, 1--68.

\item Drovandi, C.~C., McGree, J.~M., and Pettitt, A.~N.~(2013).  Sequential Monte Carlo for Bayesian sequentially designed experiments for discrete data. \emph{Computational Statistics \& Data Analysis} {\bf 57}, 320--335.

\item Elgammal, A., Duraiswami, R. and Davis, L.S. (2003). Efficient kernel density estimation using the fast gauss transform with applications to color modeling and tracking. \emph{IEEE Transactions on Pattern Analysis and Machine Intelligence} {\bf 25}, 1499--1504.  

\item Escobar, M.~D.~(1988). \emph{Estimating the means of several normal populations by nonparametric estimation of the distribution of the means}. Unpublished PhD dissertation, Department of Statistics, Yale University.

\item Escobar, M.~D.~and West, M.~(1995). Bayesian density estimation and inference using mixtures. \emph{Journal of the American Statistical Association} \textbf{90}, 577---588.

\item Ferguson, T.~S.~(1973).  A Bayesian analysis of some nonparametric problems. \emph{Annals of Statistics} {\bf 1}, 209--230.

\item Gerber, M.~S.~(2014). Predicting crime using Twitter and kernel density estimation.  \emph{Decision Support Systems} {\bf 61}, 115--125.

\item Han, B., Comaniciu, D., Zhu, Y. and Davis, L.~S.~(2008). Sequential kernel density approximation and its application  to real time visual tracking. \emph{IEEE Transactions on Pattern Analysis and Machine Intelligence} {\bf 7}, 1186--1197.

\item Hewitt, E.~and Savage, L.~J.~(1955). Symmetric measures on Cartesian products. \emph{Transactions of the American Mathematical  Society} {\bf 80}, 470--501.

\item Hjort, N.~L., Holmes, C.~C., M\"uller, P., and Walker, S.~G.~(2010). \emph{Bayesian Nonparametrics}. Cambridge University Press.

\item Inglot, T.~(2010). Inequalities for quantiles of the chi-square distribution.  \emph{Probability and Mathematical Statistics} {\bf 30}, 339--351. 

\item Jara, A., Hanson, T., Quintana, F., M\"uller, P., and Rosner, G.~(2011). DPpackage: Bayesian Semi- and Nonparametric Modeling in R. \emph{Journal of Statistical Software}, {\bf 40}, 1--30. 

\item Kushner, H.~J.~and Yin, G.~G.~(2003).  \emph{Stochastic Approximation and Recursive Algorithms and Applications}, 2nd Ed.  Springer, New York.  

\item Lambert, C., Harrington, S., Harvey, C., and Glodjo, A.~(1999).  Efficient on-line nonparametric kernel density estimation.  \emph{Algorithmica} {\bf 25}, 37--57. 


\item Lo, A.~Y.~(1984). On a class of Bayesian nonparametric estimates
I. Density estimates. \emph{Annals of Statistics} {\bf 12}, 351--357.

\item Martin, R.~and Ghosh, J.~K.~(2008), Stochastic approximation and Newton's estimate of a mixing distribution. \emph{Statistical Science}, {\bf 23}, 365--382.

\item Martin, R.~and Tokdar, S.~T. (2009). Asymptotic properties of predictive recursion: robustness and rate of convergence.  \emph{Electronic Journal of Statistics} {\bf 3}, 1455--1472.  

\item Martin, R.~and Tokdar, S.~T. (2011). Semiparametric inference in mixture models with predictive recursion marginal likelihood.  \emph{Biometrika} {\bf 98}, 567--582.

\item Nakamura, Y.~and Hasegawa, O.~(2013). Robust fast online multivariate nonparametric density estimator. \emph{Lecture Notes in Computer Science} {\bf 8227}, 180--187.

\item  Nelsen, R.~B.~(1999). \emph{An Introduction to Copulas}. New York: Springer.

\item Newton, M.~A.~and Zhang, Y.~(1999). A recursive algorithm for nonparametric analysis with   missing data. \emph{Biometrika} {\bf 86}, 15--26.

\item Newton, M.~A.~(2002). On a nonparametric recursive estimator of the mixing distribution. \emph{Sankhya} {\bf 64}, 306--322.  

\item Raykar, V.~C., Duraiswami, R., and Zhao, L.~H.~(2010). Fast computation of kernel estimators. \emph{Journal of Computational and Graphical Statistics} {\bf 19}, 205--220.

\item Robbins, H.~and Siegmund, D.~(1971).  A convergence theorem for non negative almost supermartingales and some applications.  In \emph{Optimizing Methods in Statistics $(${P}roc. {S}ympos., {O}hio {S}tate {U}niv., {C}olumbus$)$}, 233--257.  Academic Press, New York. 

\item Sethuraman, J.~(1994). A constructive definition of Dirichlet priors. \emph{Statistica Sinica} {\bf 4}, 639--650.

\item Sheather, S.J. and Jones, M.C.~(1991). A reliable data-based bandwidth selection method for kernel density estimation. \emph{Journal of the Royal Statistical Society, Series B} {\bf 53}, 683--690.


\item Sklar, M.~(1959), Fonctions de r{\'e}partition {\'a} $n$ dimensions et leurs marges. Universit{\'e} Paris 8.

\item Taddy, M.A.~(2010). Autoregressive mixture models for dynamic spatial Poisson processes: Application to tracking intensity of violent crime. \emph{Journal of the American Statistical Association} {\bf 105}, 1403--1417.

\item Tokdar, S.~T., Martin, R., and Ghosh, J.~K.~(2009). Consistency of a recursive estimate of mixing distributions. \emph{Annals of Statistics} {\bf 37}, 2502--2522. 


\end{description}

\end{document}